\documentclass[11pt]{article}
\usepackage{fullpage}
\usepackage{amsthm,amsmath,amsfonts,amssymb}
\usepackage{xcolor}
\usepackage{bbm}
\usepackage{enumerate}
\usepackage{hyperref}
\usepackage{tikz}
\usetikzlibrary{arrows.meta,snakes,decorations.pathreplacing,decorations,shadows}
\tikzset{
  vertex/.style={circle, draw, fill=white, inner sep=0pt, minimum size=5mm},
  dir/.style={-{Stealth[length=2.2mm]}, semithick},
  undir/.style={semithick}
}
\usepackage{natbib}
\usepackage{cleveref}

\newtheorem{theorem}{Theorem}[section]

\newtheorem{lemma}[theorem]{Lemma}

\newtheorem{question}{Question}
\newtheorem{corollary}[theorem]{Corollary}

\newcommand{\ignore}[1]{}

\newif\ifnotesw\noteswtrue
   {\ifnotesw\marginpar[\hfill\(\top\)]{\(\top\)}\fi}%
   {\ifnotesw\marginpar[\hfill\(\bot\)]{\(\bot\)}\fi}

\newcommand{\mnote}[1]%
    {\ifnotesw\marginpar%
        [{\scriptsize\begin{minipage}[t]{\marginparwidth}
        \raggedleft#1%
                        \end{minipage}}]%
        {\scriptsize\begin{minipage}[t]{\marginparwidth}
        \raggedright#1%
                        \end{minipage}}%
    \fi}

\newcommand{\bra}[1]{\langle #1 |}

\newcommand{\ket}[1]{| #1 \rangle}

\newcommand{\ketbra}[2]{\ket{#1}\bra{#2}}

\newcommand{\QQ}{\mathbb{Q}}
\newcommand{\ZZ}{\mathbb{Z}}
\newcommand{\RR}{\mathbb{R}}

\newcommand{\YY}{\mathcal{Y}}
\newcommand{\EE}{\mathcal{L}}

\DeclareMathOperator{\Tr}{Tr}
\DeclareMathOperator{\Gal}{Gal}
\DeclareMathOperator{\diag}{diag}
\DeclareMathOperator{\sgn}{sgn}

\newcommand{\norm}[1]{\lVert #1 \rVert}

\title{Pretty good quantum state transfer via transcendental edge weights}
\author{
Addison Ballif\footnote{Brigham Young University Idaho, Rexburg, ID, \emph{addisonballif@gmail.com}}, Mark Kempton\footnote{Brigham Young University, Provo, UT, \emph{mkempton@mathematics.byu.edu}}, James B.  Larsen\footnote{University of Michigan, Ann Arbor, MI, \emph{jblarsen@umich.edu}}, Kellon Sandall\footnote{Brigham Young University, Provo, UT, \emph{kellon08@gmail.com}},\\ Christino Tamon\footnote{Clarkson University, Potsdam, NY, \emph{tino@clarkson.edu}}, and Trevor Wai\footnote{Brigham Young University, Provo, UT, \emph{trevorwai2000@gmail.com}}
}
\date{}

\begin{document}
\maketitle
\begin{abstract}
    We prove that if we take a rooted product of a circulant graph with universal perfect state transfer with a path of fixed length whose end edge is weighted with a transcendental number, then there is pretty good state transfer between any pair of endpoints of these paths.  As a consequence, in a path with an even number of vertices with transcendental weights on the two edges incident to the endpoints, there is pretty good state transfer.  
\end{abstract}

\noindent {\bf Keywords:} Perfect state transfer; pretty good state transfer; rooted product; orthogonal polynomials.

\section{Introduction}

Considerable work in recent years has shown the importance of state transfer via a quantum walk on a graph.  As first proposed by Bose \cite{bose2003quantum}, a collection of spin-1/2 particles (qubits) are represented by the nodes of a graph, and couplings between particles are represented by the edges of that graph.  The strength of a coupling is given by a weight on the edges, and if desired, a vertex weight (diagonal entry of the adjacency matrix) can be added to give an energy potential induced by a magnetic field.  Such a system is described by a quantum walk on the graph, which is given by the unitary matrix $U(t) = e^{-itA}$ where $A$ is the (weighted) adjacency matrix of the graph.  See \cite{bose2003quantum, christandl2004perfect, godsil2011perfect, godsil2012state, Kay2011} and references therein.  The most commonly studied quantum state transfer phenomenon is \emph{perfect state transfer} (PST) between nodes of the graph.  This phenomenon is said to occur between nodes $u$ and $v$ if there is a time $\tau$ such that $|\bra{u}U(\tau)\ket{v}|=1$.  We likewise say there is \emph{pretty good state transfer} (PGST) between $u$ and $v$ if, for all $\epsilon>0$, there is a time $\tau$ such that $|\bra{u}U(\tau)\ket{v}|>1-\epsilon$.

One of the most natural frameworks in which to consider PST or PGST is in a qubit chain modeled by a graph that is a simple path. It has been known for many years now that if we allow ourselves to adjust weights on all vertices and edges of a path, then these weights can be tuned to produce PST between endpoints of an arbitrary length path \cite{christandl2004perfect}.  However, it has also been known for some time that if we restrict our attention to simple, unweighted paths, then PST cannot occur if the length of the path exceeds 3 (see \cite{godsil2012state}).  Subsequent work in \cite{kempton2016perfect} showed that if we allow only vertex weights (a potential arising from magnetic fields at the vertices) but not weights on edges, then we still cannot have PST on paths of length 4 or more.  However, if we only ask for PGST, then it was shown in \cite{kempton2017pretty} that a vertex weight whose value is a transcendental number placed on the two endpoints of the path can induce PGST in any length of path.   All this leads us to investigate exactly how many edges or vertices must be weighted to be able to achieve PST or PGST.  In other words, how simple can we make our weighting to achieve the desired quantum state transfer phenomena?  This is the motivating question behind the present work.  

Considerable further research has been done that addresses when PGST can be achieved by adding transcendental weights to vertices (diagonal entries of the matrix).  In addition to paths mentioned above, work in \cite{kempton2017pretty} also addresses adding transcendental vertex weights to graphs with an involutional symmetry.  This is further generalized in \cite{eisenberg2019pretty} where PGST is obtained by putting a transcendental weight on vertices of asymmetric graphs, where there is a pair of vertices that are cospectral (a generalization of symmetery; see \cite{kempton2019characterizing}). The effect of combining edge and vertex weights in strongly regular graphs to achieve both PST and PGST is similarly studied in \cite{godsil2020state}. Finally, similar ideas are applied to a generalization of PGST called \emph{pretty good fractional revival} in the presence of transcendental vertex weights; this is studied in \cite{drazen2024pretty}.

We say that a graph exhibits \emph{universal perfect state transfer} if there is PST between any pair of vertices.  Universal PGST is defined in a similar way.  It was shown in \cite{Kay2011} that if we work with a real symmetric adjacency matrix, then if there is PST from $u$ to $v$ and from $u$ to $w$, then $v=w$.  Thus we cannot have perfect state transfer between non-disjoint pairs of vertices and as such, no real-weighted undirected graph with more than two vertices admits universal perfect state transfer.  However, if we allow complex weights with a Hermitian adjacency matrix, then universal PST becomes possible.  See \cite{cfghst14,cgkst17} for the study of universal PST and universal PGST.  In particular, \cite{cgkst17} gives constructions of circulant graphs that achieve universal PST. More recent work in \cite{song2026circulant} characterizes when PST can happen in oriented circulant graphs.  The work in \cite{aceghtwz2025chiral} generalizes the work in \cite{kempton2017pretty} by studying rooted products of a graph with universal PST with paths that have a transcendental weight on the end vertex.  That is, given a circulant graph $X$ with universal PST, attach a path of a fixed length $k$ to each vertex of $X$, and then put a transcendental weight on the end vertex.  It is shown in \cite{aceghtwz2025chiral} that this yields PGST between any pair of these end vertices.  Our main result is a variation on this idea.  

It is thus a natural question to ask how much it matters that the transcendental weight be placed on a vertex rather than an edge.  Motivated by this question, we consider the same kinds of rooted products that were studied in \cite{aceghtwz2025chiral}, but instead of putting the weight on a loop at the end vertex, we put a transcendental weight on the edge incident to the end vertex, and leave all vertices unweighted (so all diagonal entries of the adjacency matrix are 0).  The main result of this paper is the following.

\begin{theorem}\label{thm1}
    Construct a weighted graph by starting with a (complex-weighted) circulant graph $X$, attach a path of length $k$ to each vertex of $X$, and weight the edge incident to the endpoint of each of these paths with some transcendental number.  Then there is PGST between any pair of end vertices. 
\end{theorem}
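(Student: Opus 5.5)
The plan is to reduce the problem to a family of small weighted paths, one for each eigenvalue of $X$, and then apply Kronecker's theorem. Transcendence of the edge weight $\eta$ will be what limits the integer relations among the eigenvalues.

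\textbf{Reduction to weighted paths.} Let $X$ be a circulant on $n$ vertices with Hermitian adjacency matrix $C$. Write $\lambda_j$ for its eigenvalues and $f_j=(\omega^{jx})_x/\sqrt n$ for its Fourier eigenvectors. The adjacency matrix of the rooted product is $C\otimes E_{00}+I_n\otimes P$, where $P$ is the path on $k+1$ vertices $0,\dots,k$ with the last edge weighted by $\eta$. Conjugating by $F\otimes I$ splits it as $\bigoplus_j P_j$, where $P_j=P+\lambda_jE_{00}$ is a path with unit edges except the last, which has weight $\eta$, and with potential $\lambda_j$ at the root.

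For the end vertices $(a,k)$ and $(b,k)$ this gives
\[
\bra{(b,k)}e^{-itA}\ket{(a,k)}=\frac1n\sum_j\omega^{j(b-a)}\sum_r |\phi_{j,r}(k)|^2e^{-it\theta_{j,r}},
\]
where $\theta_{j,r}$ are the eigenvalues of $P_j$ and $\phi_{j,r}$ are its normalized eigenvectors. Since $\sum_r|\phi_{j,r}(k)|^2=1$, PGST follows once we find times $t$ with $e^{-it\theta_{j,r}}\approx\gamma\,\omega^{j(a-b)}$ simultaneously for all $j,r$, with $\gamma$ a common unimodular constant. For this we need $\phi_{j,r}(k)\neq0$. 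That holds because the end vertex generates the whole path, and we also need each $P_j$ to have simple spectrum, which is true for an irreducible tridiagonal matrix. For the same reason two distinct $\lambda_j$ never share an eigenvalue $\theta$ coming from the end vertex, after accounting for the multiplicity of $\lambda_j$ in $C$.

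\textbf{Kronecker step.} By Kronecker's theorem, it suffices to show that every integer relation $\sum_{j,r}\ell_{j,r}\theta_{j,r}=0$ forces $\sum\ell_{j,r}\bigl(\tfrac{2\pi j(a-b)}{n}+\arg\gamma\bigr)\in2\pi\mathbb Z$ for a suitable $\gamma$.

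We then identify the relations. The characteristic polynomial of $P_j$ satisfies $\chi_j(x)=x\,p_{j,k-1}(x)-\eta^2p_{j,k-2}(x)$, where $p_{j,m}$ is the characteristic polynomial of the first $m$ vertices of $P_j$; this is the usual three-term recurrence. Assume $\eta$ is transcendental over the field $K$ generated by the $\lambda_j$. When the weights are algebraic, which is the universal-PST setting of \cite{cgkst17}, this is exactly the hypothesis. Then any integer relation among the $\theta_{j,r}$ holds identically when $\eta$ is replaced by an indeterminate $s$, with the $\theta_{j,r}(s)$ as algebraic functions in $\overline{K(s)}$.

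We test such a relation along asymptotic regimes. As $s\to\infty$, each $P_j$ has two eigenvalues $\approx\pm s+\cdots$, while the rest tend to the eigenvalues of the shorter path with potential $\lambda_j$. Puiseux expansions give explicit lower-order terms. We also use the Galois action of $K(s)$, which permutes the roots of each irreducible factor of $\chi_j$. From these we expect to show that the only relations are those generated by the trace identities $\sum_r\theta_{j,r}=\lambda_j$, combined with the integer relations among the $\lambda_j$ themselves.

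\textbf{Closing, and the main obstacle.} The trace identities turn any relation into a relation $\sum_j m_j\lambda_j=0$ with $m_j=\sum_r\ell_{j,r}$, and the target condition becomes $\sum_j m_j\bigl(\tfrac{2\pi j(a-b)}n+\arg\gamma\bigr)\in 2\pi\mathbb Z$. This is precisely the Kronecker condition that universal PST of $X$ already gives. At the PST time $\tau$ we have $e^{-i\tau\lambda_j}=\gamma_0\,\omega^{j(a-b)}$, so $\gamma=\gamma_0$ works. The main obstacle is the middle step: proving rigorously that no other integer relations exist. The delicate points are extra relations coming from the symmetry $\theta\mapsto-\theta$ (the case $\lambda_j=0$ makes $P_j$ bipartite), and coincidences among the limiting eigenvalues for different $j$. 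I would first try to handle these by comparing the first few Puiseux coefficients in $s^{-1}$, which depend on $\lambda_j$ in a controlled polynomial way. Any surviving relation should be shown to be consistent with the target phases, for instance because paired eigenvalues $\pm\theta$ receive the same target phase only when $\gamma^2\omega^{2j(a-b)}=1$, and this holds for $\lambda_j=0$ by the PST condition.
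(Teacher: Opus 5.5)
Your reduction to the blocks $P_j=P+\lambda_jE_{00}$ and your closing step are sound. The gap is that the whole proof rests on the middle claim, that every integer relation $\sum_{j,r}\ell_{j,r}\theta_{j,r}=0$ forces $\sum_j m_j\lambda_j=0$, and you leave this as an expectation. The route you sketch also aims at a stronger classification that is false: that all relations are generated by the trace identities together with relations among the $\lambda_j$. Take $X=K_2$, the case behind the even-path corollary. There $\lambda_1=-\lambda_0$. Conjugating by $D=\diag((-1)^i)$ gives $DP_1D=-P_0$, since the pendant path is bipartite and $D$ fixes $E_{00}$. Hence $\mathrm{spec}(P_1)=-\mathrm{spec}(P_0)$, and each $\theta_{0,r}+\theta_{1,r'}=0$ is a relation whose coefficient vector is not constant on blocks, so it lies outside that lattice. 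Comparing Puiseux coefficients does not close the gap either. It only turns the question into one about integer relations among algebraic limits (roots of the shortened paths) and correction terms, which is no easier.

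The missing idea, and the key step of the paper's proof, is to average the relation over the Galois group with the field trace instead of classifying relations. Let $F=K(\eta^2)$, let $M$ be the splitting field of $\prod_j\chi_j$ over $F$, and apply $\Tr_{M/F}$ to the relation. If $\chi_j$ is irreducible over $F$, every root $\theta_{j,r}$ has trace $\frac{[M:F]}{k+1}\lambda_j$, independent of $r$. The relation then collapses at once to $\frac{[M:F]}{k+1}\sum_j m_j\lambda_j=0$.

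Irreducibility is where transcendence enters. Write $\chi_j=x\,u_j(x)-s\,v_j(x)$ with $s=\eta^2$, where $u_j,v_j\in K[x]$ are the characteristic polynomials of $P_j$ with its last one, respectively two, vertices deleted. Since $\chi_j$ is monic in $x$ and linear in the transcendental $s$, Gauss's lemma makes it irreducible over $K(s)$ as soon as $x\,u_j$ and $v_j$ are coprime. Strict interlacing gives $\gcd(u_j,v_j)=1$. Moreover $v_j(0)=0$ happens only when $\lambda_j=0$ and $k+1$ is odd, which is exactly your bipartite worry. In that case $\chi_j=x\,S_j(x)$ with $S_j$ even and again linear in $s$ with coprime coefficients, hence irreducible. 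So every root of $\chi_j$ still has trace $0=\lambda_j$. Working over $K(\eta^2)$ rather than $K(\eta)$ is the convenient choice: over $K(\eta)$ the block $x^2-\eta^2$ (the case $k=1$, $\lambda_j=0$) already factors.

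Once $\sum_j m_j\lambda_j=0$ is established, your closing step finishes the proof. Keep it as written, with the global phase $\gamma_0$ taken from PST in $X$. That phase is genuinely needed: in the $K_2$ example the relation $\theta_{0,r}+\theta_{1,r'}=0$ forces $\gamma^2=-1$. Arguing from PST of $X$ directly also avoids normalizing the spectrum of $C$ by a shift and rescaling. Such a normalization is not harmless here, because it changes the rooted product.
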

An immediate corollary to this result is the following.

\begin{corollary}\label{cor1}
    Given a path with an even number of vertices, if we put a transcendental weight on the edge incident to the end vertex, then there is PGST between the two endpoints.  
\end{corollary}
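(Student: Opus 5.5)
We will derive Corollary~\ref{cor1} from Theorem~\ref{thm1} by realizing the path with an even number of vertices as such a rooted product. Take $X = K_2$, the circulant graph on two vertices joined by a single edge of weight $1$. This graph has universal perfect state transfer: with adjacency matrix $\begin{pmatrix}0&1\\1&0\end{pmatrix}$ we get $e^{-itA} = \cos t\, I - i\sin t\, A$, so at $t=\pi/2$ there is PST between its two vertices. Since $X$ has only two vertices, universal PST is simply PST between them, and no complex weights are needed.

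Now let the path have $n = 2m$ vertices $v_1,\dots,v_{2m}$. Its middle edge $v_mv_{m+1}$ is a copy of $X$, with root $v_m$ on one side and root $v_{m+1}$ on the other. The sub-paths $v_m,v_{m-1},\dots,v_1$ and $v_{m+1},\dots,v_{2m}$ each have length $k = m-1$. So the path is exactly the rooted product of $X$ with a path of length $k$ attached at each vertex. Its end edges are $v_1v_2$ and $v_{2m-1}v_{2m}$.

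We interpret the hypothesis ``put a transcendental weight on the edge incident to the end vertex'' as weighting both end edges by the same transcendental number $\gamma$, as in the abstract. This is the weighting required by Theorem~\ref{thm1}, where the same weight is placed at the end of each attached path. The symmetric choice matters: the construction treats all attached paths identically, and the reflection symmetry of the path interchanges $v_1$ and $v_{2m}$.

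Applying Theorem~\ref{thm1} then gives PGST between the end vertices $v_1$ and $v_{2m}$, which are the two endpoints of the path.

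Two degenerate cases need a separate check. If $m=1$, then $k=0$ and the ``path'' is just $X$ itself. Here the end edge is the middle edge, so we would use $K_2$ with transcendental weight $\gamma$ directly. Its walk is $e^{-it\gamma A}$, which gives PST, and hence PGST, at $t = \pi/(2\gamma)$. If $m=2$, then $k=1$ and the end edges are the attached edges themselves, so the theorem applies directly.

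The main obstacle is bookkeeping rather than mathematics: confirming that Theorem~\ref{thm1} is stated so that it allows $X=K_2$ with real weight $1$, and the small value of $k$. The odd case is excluded because a path with an odd number of vertices has no central edge, so it does not decompose as such a rooted product.
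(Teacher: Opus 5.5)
Your proposal is correct and is essentially the paper's proof. Both realize the path on $2m$ vertices as the rooted product of the two-vertex circulant $K_2$ (which has PST, and hence trivially universal PST) with copies of $P_m^\alpha$, and then invoke Theorem~\ref{thm:main}; your separate handling of the two-vertex path, where there is no attached edge to carry the weight and PST occurs directly at $t=\pi/(2|\gamma|)$, is a small case the paper's one-line argument passes over.
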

Interestingly, our proof technique does not help us understand odd-length paths.  Thus for even-length paths, weighting the end edges has largely the same effect as putting a weight on the end vertices, but for odd-length paths, this may not be the case.  

The remainder of the paper is organized as follows.  In Section  \ref{sec:prelim} we will give the necessary preliminaries regarding PGST, universal state transfer, and our rooted product constructions and their characteristic polynomials.  We prove the main result, Theorem \ref{thm1} and Corollary \ref{cor1} in Section 3.  We end with some discussion in Section \ref{sec:concl}.  Some appendices with technical details around Galois theory and orthogonal polynomials are included at the end for reference.  

\section{Preliminaries}\label{sec:prelim}

\subsection{State Transfer}

Let $X$ be a graph, and let $A=A(X)$ be its adjacency matrix. Let $A=\sum_\lambda \lambda E_\lambda$ be the spectral decomposition of the adjacency matrix $A$.  

We say there is \emph{perfect state transfer} from vertex $a$ to vertex $b$ in the graph $X$ if there is some time $\tau>0$ with $|\bra{b} e^{-iA(X)\tau}\ket{a}|=1$. We say there is {\em pretty good state transfer} (PGST) from $a$ to $b$ in $X$ if
for any $\varepsilon > 0$, there is a time $\tau$ so that
$1 - \varepsilon \le |\bra{b} e^{-iA(X)\tau} \ket{a}|$.

Let $u, v \in V(X)$ be two vertices, and let $A_x$ denote the matrix obtained by deleting row and column $x$ from the matrix $A$.  We say $u$ and $v$  are cospectral if $\phi(A_u) = \phi(A_v)$ where $\phi(M)$ denotes the characteristic polynomial of the matrix $M$.  We say that $u$ and $v$ are \emph{strongly cospectral} if $E_\lambda e_u =\pm E_\lambda e_v$ for all $\lambda$.  






The following theorem is useful to determine whether pretty good state transfer occurs.

\begin{theorem} (Kronecker, see \cite{lz})\label{thm:kronecker}
Let $\lambda_1,\ldots,\lambda_d$ and $q_1,\ldots,q_d$ be arbitrary real numbers.
For an arbitrarily small $\epsilon$, the system of inequalities
\begin{equation}
	|\lambda_r\tau - q_r| < \epsilon \pmod{2\pi}, 
	\ \ \
	r=1,\ldots,d
\end{equation}
admits a solution for $\tau$ if and only if for integers $\ell_1,\ldots,\ell_d$ with
\begin{equation}
	\ell_1\lambda_1 + \ldots + \ell_d\lambda_d = 0
\end{equation}
implies
\begin{equation}
	\ell_1 q_1 + \ldots + \ell_d q_d \equiv 0\pmod{2\pi}.
\end{equation}
\end{theorem}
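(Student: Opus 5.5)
The statement is classical, so I would give a short self-contained argument through the structure of closed subgroups of the torus $\mathbb{T}^d=\RR^d/2\pi\ZZ^d$. Write $\lambda=(\lambda_1,\dots,\lambda_d)$ and $q=(q_1,\dots,q_d)$, and let $\pi:\RR^d\to\mathbb{T}^d$ be the quotient map. The system of inequalities is solvable for every $\epsilon>0$ exactly when $\pi(q)$ lies in the closure $H$ of the one-parameter subgroup $\{\pi(\tau\lambda):\tau\in\RR\}$. So the theorem is the claim that $\pi(q)\in H$ if and only if $\ell\cdot q\equiv 0 \pmod{2\pi}$ for every $\ell\in\ZZ^d$ with $\ell\cdot\lambda=0$.

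\emph{Necessity.} Suppose $\ell\in\ZZ^d$ satisfies $\ell\cdot\lambda=0$, and $\tau$ solves the system with error $\epsilon$. Then $\lambda_r\tau-q_r=\delta_r+2\pi m_r$ with $|\delta_r|<\epsilon$ and $m_r\in\ZZ$. Taking the $\ell$-combination gives $\ell\cdot q\equiv -\ell\cdot\delta \pmod{2\pi}$, and $|\ell\cdot\delta|<\epsilon\sum_r|\ell_r|$. Since $\epsilon$ is arbitrary and $\ell$ is fixed, $\ell\cdot q\equiv 0$.

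\emph{Sufficiency.} $H$ is a closed subgroup of the compact abelian group $\mathbb{T}^d$, being the closure of a subgroup. The continuous characters of $\mathbb{T}^d$ are $\chi_\ell(x)=e^{i\ell\cdot x}$ for $\ell\in\ZZ^d$. First, $\chi_\ell$ is trivial on $H$ iff $e^{i\tau\,\ell\cdot\lambda}=1$ for all $\tau\in\RR$, iff $\ell\cdot\lambda=0$; closure does not matter by continuity. The hypothesis therefore says precisely that every character trivial on $H$ is also trivial at $\pi(q)$. It remains to show that a closed subgroup equals the set of points annihilated by its annihilator. Equivalently: if $x\notin H$, some character of $\mathbb{T}^d$ is trivial on $H$ but not at $x$.

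To prove this, pass to the compact abelian quotient $G=\mathbb{T}^d/H$ and show that its characters separate points. Characters of $G$ pull back to characters of $\mathbb{T}^d$ trivial on $H$, so they are among the $\chi_\ell$. I would argue concretely with Fourier series. Suppose every $\chi_\ell$ trivial on $H$ were also trivial at $x\notin H$. Take a continuous $f\ge 0$ on $\mathbb{T}^d$ with $f(x)>0$, supported in a small neighbourhood $U$ of $x$ with $(U-U)\cap H=\emptyset$. Average it over $H$ with respect to normalized Haar measure $\mu_H$: set $g(y)=\int_H f(y+h)\,d\mu_H(h)$. Then $g$ is continuous, $H$-invariant, $g(x)>0$, and $g(0)=0$.

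The Fourier coefficients of an $H$-invariant function vanish except at those $\ell$ with $\chi_\ell|_H\equiv 1$. Hence $g$ is a uniform limit of Fejér means built only from such characters, all of which take the same value at $0$ and at $x$. This forces $g(x)=g(0)=0$, a contradiction.

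I expect the main obstacle to be this duality step, which is the only non-formal part. To keep it elementary I would rely on Fejér summation on $\mathbb{T}^d$ (uniform convergence of Cesàro means for continuous functions) together with the existence of Haar measure on the compact group $H$. An alternative avoids $H$ entirely. Reduce to the case of rationally independent $\lambda_r$ by choosing a $\ZZ$-basis of the lattice $\{\ell:\ell\cdot\lambda=0\}$ and changing coordinates. Then use Weyl equidistribution of $\tau\lambda$ on the torus, via $\frac1T\int_0^T e^{i\tau\,\ell\cdot\lambda}\,d\tau\to 0$ for $\ell\cdot\lambda\ne 0$. I would fall back on this route if the averaging argument becomes messy.
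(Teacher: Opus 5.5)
Your proof is correct apart from one slip, noted below. It necessarily takes a different route from the paper, which gives no argument of its own. Theorem~\ref{thm:kronecker} is quoted from \cite{lz} and used as a black box. Only its ``if'' direction is needed in the proof of Theorem~\ref{thm:main}, where the integer relations among the $\theta_{k,j}$ are checked against the quarrels to conclude solvability.

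You instead make the result self-contained. You recast solvability for every $\epsilon$ as $\pi(q)\in H=\overline{\{\pi(\tau\lambda):\tau\in\RR\}}$, and observe that the characters $\chi_\ell$ trivial on $H$ are exactly those with $\ell\cdot\lambda=0$. You then show that a closed subgroup of $\mathbb{T}^d$ is the common kernel of the characters trivial on it, by averaging a bump function over $H$ and applying Fej\'er summation. In this form the hypothesis is precisely the annihilator of the orbit closure, so rationally dependent $\lambda_r$ need no separate treatment. The cost is relying on Haar measure on $H$ and uniform convergence of multivariate Fej\'er means; the paper gets brevity by citation.

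The slip: you ask for a neighbourhood $U$ of $x$ with $(U-U)\cap H=\emptyset$. This can never hold, since $0$ lies in both sets. All you need is $U\cap H=\emptyset$, which exists because $H$ is closed and $x\notin H$. Then $f$ vanishes on $H$, so $g(0)=\int_H f\,d\mu_H=0$. Also $g(x)>0$, because $f(x+h)>0$ for $h$ in a relative neighbourhood of $0$ in $H$, and Haar measure on the compact group $H$ gives positive mass to nonempty open sets.

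If you use the fallback route instead, make explicit why the coordinate change is unimodular. It works because the relation lattice $\{\ell\in\ZZ^d:\ell\cdot\lambda=0\}$ is primitive in $\ZZ^d$ (if $k\ell$ is a relation, so is $\ell$), so a basis of it extends to a basis of $\ZZ^d$.
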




\subsection{Multiple State Transfer}

We say that a graph $X$ admits \emph{universal perfect state transfer} if it has perfect state transfer between every pair of vertices.  As mentioned in the introduction,  if we work with a real symmetric adjacency matrix, then we cannot have perfect state transfer between non-disjoint pairs of vertices and as such, no real-weighted undirected graph with more than two vertices admits universal perfect state transfer \cite{Kay2011}.  In contrast, there are many examples of complex-weighted Hermitian adjacency matrices which admit universal perfect state transfer. The study of universal state transfer was initiated in \cite{cfghst14} and \cite{cgkst17}, and we will use many tools developed in these papers.  

Let $X$ be a graph whose adjacency matrix has the following spectral decomposition:
\[
	A(X) = \sum_\lambda \lambda E_\lambda.
\]
Two vertices $a,b$ are {\em strongly cospectral} if there exist $q_\lambda(a,b) \in (-\pi,\pi]$ 
(or {\em quarrel}, see \cite{gl20}) for each eigenvalue $\lambda$ of $X$, so that
\begin{equation} \label{eqn:quarrel}
	E_\lambda \ket{a} = e^{iq_\lambda(a,b)}E_\lambda \ket{b}.
\end{equation}

Similarly, we say there is {\em universal pretty good state transfer} if for all pairs of vertices $a,b$ in $X$ and
for any $\varepsilon > 0$, there is a time $\tau$ so that
$1 - \varepsilon \le |\bra{b} e^{-iA(X)\tau} \ket{a}|$.

Therefore,
\begin{eqnarray}
1 - \varepsilon
	& \le & \left| \sum_\lambda e^{-i\lambda \tau} \bra{b}E_\lambda\ket{a} \right|,
		\ \ \mbox{ by spectral decomposition } \\
	& = & \left| \sum_\lambda e^{i(q_\lambda(a,b) - \lambda \tau)} \bra{b}E_\lambda\ket{b} \right|,
		\ \ \mbox{ by strong cospectrality (quarrel) } \\
	& \le & \sum_\lambda |\bra{b}E_\lambda\ket{b}|,
		\ \ \mbox{ by triangle inequality } \\
	& = & \sum_\lambda \bra{b}E_\lambda\ket{b},
		\ \ \mbox{ because $\bra{b}E_\lambda\ket{b}$ is nonnegative } \\
	& = & 1,
		\ \ \mbox{ since $\sum_\lambda E_\lambda = I$ }
\end{eqnarray}
Thus, we can see that if for each eigenvalue $\lambda$ we have
\begin{equation}
	|\lambda\tau - q_\lambda(a,b)| < \epsilon \pmod{2\pi}
\end{equation}
then there is PGST between $a$ and $b$.

The Discrete Fourier Transform (DFT) matrix of order $n$ is defined as
\begin{equation}
	F_n = \frac{1}{\sqrt{n}}
	\begin{pmatrix}
	1 & 1 & 1 & \ldots & 1 \\
	1 & \zeta & \zeta^2 & \ldots & \zeta^{n-1} \\
	1 & \zeta^2 & \zeta^4 & \ldots & \zeta^{2(n-1)} \\
	\vdots & \vdots & \vdots & \vdots & \vdots \\
	1 & \zeta^{n-1} & \zeta^{2(n-1)} & \ldots & \zeta^{(n-1)^2} 
	\end{pmatrix}
\end{equation}
where $\zeta = e^{2\pi i/n}$. 
Any circulant matrix of order $n$ is diagonalized by $F_n$.
Suppose $C$ is a circulant matrix where $F_n^\dagger C F_n = \diag(\lambda_0,\lambda_1,\ldots,\lambda_{n-1})$. 
 Then, the spectral decomposition for $C$ can be expressed  in terms of the columns of $F_n$.  If $\mathbf{z_k} = \begin{pmatrix}1&\zeta^k&\zeta^{2k}&\dots&\zeta^{k(n-1)}\end{pmatrix}^T$ then the spectral idempotents are given by $E_{\lambda_k}=\mathbf{z_k}\mathbf{z_k}^T$ in the spectral decomposition \[C=\sum_{k=0}^{n-1}\lambda_{k}E_{\lambda_k}.\]  As such, for a graph with adjacency matrix $C$, we can determine the quarrels explicitly: if the vertices of the circulant graph are labeled $0,1,\dots,n-1$, then for vertices $a$ and $b$, \begin{equation}\label{eq:quarrel}
	q_{\lambda_k}(a,b) = \frac{2\pi(b-a)k}{n},
	\ \ \mbox{ $k=0,1,\ldots,n-1$.}
\end{equation}

\subsection{Rooted Products}

It will be convenient to describe our constructions as \emph{rooted products} of graphs.  Let $X$ be a graph on $n$ vertices and let $\YY = \{Y_1(a_1),\ldots,Y_n(a_n)\}$ be a collection 
of rooted graphs where $a_i \in V(Y_i)$ is a special designated root vertex of $Y_i$. 
The {\em rooted product} $Z = X \circ \YY$ is the graph obtained by attaching $Y_i$ 
(through its root vertex $a_i$) to the $i$-th vertex of $X$.

\begin{theorem} (Godsil and McKay \cite{godsil1978new}) \label{thm:charpoly-rooted}
The characteristic polynomial of $Z$ is given by
\begin{equation}
	\phi(Z,t) = \det(D(t) - D_0(t)A(X))
\end{equation}
where $D(t) = \diag(\{\phi(Y_i,t)\}_{i=1}^{n})$ and 
$D_0(t) = \diag(\{\phi(Y_i \setminus a_i)\}_{i=1}^{n})$.
\end{theorem}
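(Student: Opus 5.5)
The plan is to prove the identity $\phi(Z,t)=\det(D(t)-D_0(t)A(X))$ as an identity of rational functions in $t$, using a Schur complement that eliminates all non-root vertices. Both sides are polynomials in $t$, so it suffices to check equality for all $t$ outside a finite set.

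\textbf{Setting up the blocks.} Order the vertices of $Z$ so that the $n$ roots $a_1,\ldots,a_n$ come first, followed by the non-root vertices of $Y_1$, then those of $Y_2$, and so on. Let $B_i=A(Y_i\setminus a_i)$. Let $c_i$ be the column vector of weights of edges from $a_i$ to the other vertices of $Y_i$, and let $w_i$ be the diagonal entry of $A(Y_i)$ at $a_i$ (zero if there is no loop). In this ordering,
\[
tI-A(Z)=\begin{pmatrix} tI - W - A(X) & -C \\ -C^\dagger & tI-B\end{pmatrix},
\]
where $W=\diag(w_i)$, $B=\bigoplus_i B_i$, and $C$ is block diagonal with the row $c_i^\dagger$ in row $i$ and the columns of $Y_i$. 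The only interaction between different $Y_i$'s is through the roots, via $A(X)$. This is the defining property of the rooted product, and the argument works equally well for complex Hermitian $A(X)$.

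\textbf{Schur complement.} Fix $t$ that is not an eigenvalue of any $B_i$. Then
\[
\det(tI-A(Z))=\det(tI-B)\,\det\big(tI-W-A(X)-C(tI-B)^{-1}C^\dagger\big).
\]
Because $B$ and $C$ are block diagonal, the matrix $C(tI-B)^{-1}C^\dagger$ is diagonal, with $i$-th entry $c_i^\dagger(tI-B_i)^{-1}c_i$. Also $\det(tI-B)=\prod_i\phi(Y_i\setminus a_i,t)$.

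\textbf{Identifying the diagonal entries.} Apply the same Schur complement to the single graph $Y_i$, with the root in the first position:
\[
\phi(Y_i,t)=\phi(Y_i\setminus a_i,t)\big(t-w_i-c_i^\dagger(tI-B_i)^{-1}c_i\big).
\]
The diagonal part of the Schur complement above is therefore $D_0(t)^{-1}D(t)$. Hence
\[
\phi(Z,t)=\det D_0(t)\,\det\big(D_0(t)^{-1}D(t)-A(X)\big)=\det\big(D(t)-D_0(t)A(X)\big).
\]
Pulling $D_0$ inside in the last step is valid because $D_0$ is diagonal and invertible at such $t$.

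\textbf{Extending to all $t$.} Both sides are polynomials in $t$ and they agree for all but finitely many $t$, so they are equal identically. The main point requiring care is the middle step: one must observe that the coupling term $C(tI-B)^{-1}C^\dagger$ is diagonal. This holds because distinct branches $Y_i$ share no edges off the roots. One must then match its entries with the single-branch ratio $\phi(Y_i,t)/\phi(Y_i\setminus a_i,t)$. The remaining steps are routine bookkeeping.
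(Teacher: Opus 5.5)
Your argument is correct. The Schur complement of $tI-B$ is legitimate for all $t$ off the finitely many eigenvalues of the $B_i$. The coupling term $C(tI-B)^{-1}C^\dagger$ is diagonal precisely because distinct branches meet only at their roots. Applying the same elimination to each $Y_i$ on its own identifies the diagonal of $tI-W-C(tI-B)^{-1}C^\dagger$ as $D_0(t)^{-1}D(t)$. Finally, the passage from a cofinite set of $t$ to a polynomial identity is sound, with the usual convention $\phi(Y_i\setminus a_i,t)=1$ when $Y_i$ is a single vertex.

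The paper itself gives no proof of this statement; it simply quotes Godsil and McKay, so there is no in-paper route to compare against. What your self-contained derivation buys is transparency about hypotheses. It uses nothing about $A(X)$ beyond its being an $n\times n$ matrix, with loops at the roots absorbed into $W$, and nothing about the branches beyond their block structure. It therefore justifies the formula in exactly the generality the paper later relies on: a complex-weighted Hermitian circulant $X$ and branches $P_m^\alpha$ carrying a transcendental edge weight.
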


\subsection{Orthogonal Polynomials} 

Consider the Jacobi matrix given by
\begin{equation} \label{eqn:jacobi-matrix}
	\begin{pmatrix}
	a_m    & b_m        & 0                  & \ldots   & 0      & 0         & 0 \\
	b_m      & a_{m-1}  & b_{m-1}            & \ldots   & 0      & 0         & 0 \\
	0      & b_{m-1}    & a_{m-2}            & \ldots   & 0      & 0         & 0 \\
	\vdots & \vdots     & \vdots             &          & \vdots & \vdots    & \vdots \\
	0      & 0          & 0                  & \ldots   & a_3    & b_3       & 0 \\
	0      & 0          & 0                  & \ldots   & b_3    & a_2       & b_2 \\
	0      & 0          & 0                  & \ldots   & 0      & b_2       & a_1
	\end{pmatrix}
\end{equation}
By Favard's theorem (see Theorem \ref{thm:favard} in Appendix \ref{sec:app_orthpolyn}), we can construct a sequence of orthogonal polynomials 
$(Q_k(x))_{k=0}^{m-1}$ using the three-term recurrence induced by \eqref{eqn:jacobi-matrix}:
\begin{equation}
b_{k+1} Q_{k}(x) = (x - a_k)Q_{k-1}(x) - b_k Q_{k-2}(x), 
	\ \ \mbox{ $k=1,\ldots,m$}
\end{equation}
where $Q_0(x) = 1$, $Q_{-1}(x) = 0$ and $b_{m+1} = b_1 = 1$ (for convenience).

This can be expressed in the following convenient matrix equation.
If $x$ is a root of $Q_m(x)$ then
\begin{equation}
	\begin{pmatrix}
	a_m    & b_m        & 0                  & \ldots   & 0      & 0         & 0 \\
	b_m    & a_{m-1}    & b_{m-1}            & \ldots   & 0      & 0         & 0 \\
	0      & b_{m-1}    & a_{m-2}            & \ldots   & 0      & 0         & 0 \\
	\vdots & \vdots     & \vdots             &          & \vdots & \vdots    & \vdots \\
	0      & 0          & 0                  & \ldots   & a_3    & b_3       & 0 \\
	0      & 0          & 0                  & \ldots   & b_3    & a_2       & b_2 \\
	0      & 0          & 0                  & \ldots   & 0      & b_2       & a_1
	\end{pmatrix}
	\begin{pmatrix}
	Q_{m-1}(x) \\ 
	Q_{m-2}(x) \\
	Q_{m-3}(x) \\ 
	\vdots \\
	Q_2(x) \\
	Q_1(x) \\
	Q_0(x)
	\end{pmatrix}
	=
	x
	\begin{pmatrix}
	Q_{m-1}(x) \\ 
	Q_{m-2}(x) \\
	Q_{m-3}(x) \\ 
	\vdots \\
	Q_2(x) \\
	Q_1(x) \\
	Q_0(x)
	\end{pmatrix}
\end{equation}
As we will see below, our focus is on the special case where 
$a_1 = \ldots = a_{m-1} = 0$, $a_m = \lambda$, $b_2=\alpha$, and $b_3 = \ldots = b_m = 1$.
\ignore{
By Theorem \ref{thm:favard}, 
\begin{equation} \label{eqn:pseudo-chebyshev}
Q_0(x),Q_1(x),\ldots,Q_m(x)
\end{equation} 
forms a sequence of orthogonal polynomials with respect to a unique moment functional $\EE$.
That is, $Q_k(x)$ is a polynomial of degree $k$, $\EE[Q_j(x)Q_k(x)] = 0$ for $j \neq k$,
and $\EE[Q_k(x)^2] \neq 0$, for $k=0,1,\ldots,m$.
}

\section{Adding weighted paths to circulant graphs}
In this section, we will prove our main result that adding paths to each node of a circulant graph with universal PST and then adding a transcendental weight to the end edge of each path will yield PGST between the endpoints.  (A (weighted) graph is called a circulant graph if its adjacency matrix is circulant.)







Let $X$ be a circulant graph with universal PST. We focus on the rooted product with $Y_i(a_i) = P_m^\alpha(1)$, for each $i$, which
we will denote as $X \circ P_m^\alpha(1)$.
Recall vertex $1$ of $P_m^\alpha$ is the pendant vertex not adjacent to the $\alpha$-weighted edge.
The adjacency matrix of $Z$ can be written as an $m \times m$ block matrix:
\begin{equation} \label{eqn:rooted-matrix}
	A(Z) = 
	\begin{pmatrix}
	A(X) & I_n & 0 & \ldots & 0 & 0 \\
	I_n  & 0   & I_n & \ldots & 0 & 0 \\
	0    & I_n & 0 & \ldots & 0 & 0 \\
	\vdots & \vdots & \vdots & \vdots & \vdots \\
	0    & 0   & 0  &       & 0 & \alpha I_n \\
	0    & 0   & 0  &       & \alpha I_n & 0
	\end{pmatrix}
\end{equation}
We may index the vertices of $Z$ as follows: 
for $j=1,\ldots,m$ and $a \in V(X)$, let $(j,a)$ denote the $j$-th vertex on the path $P_m^\alpha$
that is rooted at vertex $a$ of $X$. So, $(m,a)$ denotes the vertex at the end (incident to the edge with the transcendental weight $\alpha$)  of the path  
rooted at vertex $a$ of $X$ and $(1,a)$ denotes the base of this path (which is identified with
the original vertex $a$ in $X$).


We now return to the adjacency matrix of the rooted product $Z$.
Suppose $\begin{pmatrix} y_m & y_{m-1} & \ldots & y_1 \end{pmatrix}^T$ is
an eigenvector of $Z$ with eigenvalue $\theta$:
\begin{equation} \label{eqn:jacobi}
	\begin{pmatrix}
	A(X) & I_n & 0 & \ldots & 0 & 0 \\
	I_n & 0 & I_n & \ldots & 0 & 0 \\
	0 & I_n & 0 & \ldots & 0 & 0 \\
	\vdots & \vdots & \vdots & \vdots & \vdots \\
	0 & 0 & 0  &       & 0 & \alpha I_n \\
	0 & 0 & 0  &       & \alpha I_n & 0
	\end{pmatrix}
	\begin{pmatrix}
	y_{m} \\ 
	y_{m-1} \\
	y_{m-2} \\
	\vdots \\
	y_2 \\
	y_1
	\end{pmatrix}
	=
	\theta
	\begin{pmatrix}
	y_{m} \\ 
	y_{m-1} \\
	y_{m-2} \\
	\vdots \\
	y_2 \\
	y_1
	\end{pmatrix}
\end{equation}
It is clear that each entry $y_i$ may depend on $\theta$.
Therefore, from the last $m-1$ equations, we obtain
\begin{eqnarray}
y_1 & = & Q_0(\theta) y_1,
	\ \ \mbox{ since $Q_0(x) = 1$ } \\
y_2 & = & (\theta/\alpha) y_1 = Q_1(\theta) y_1,
	\ \ \mbox{ since $Q_1(x) = x/\alpha$ } \\
y_3 & = & \theta y_2 - \alpha y_1 = (\theta Q_1(\theta) - \alpha Q_0(\theta)) y_1 = Q_2(\theta) y_1,
	\ \ \mbox{ since $Q_2(x) = xQ_1(x) - \alpha Q_0(x)$ } \\
\vdots & = & \vdots \\
\label{eqn:boo-1}
y_{m-1} & = & \theta y_{m-2} - y_{m-3} = (\theta Q_{m-3}(\theta) - Q_{m-4}(\theta)) y_1 = Q_{m-2}(\theta) y_1 \\
\label{eqn:boo-0}
y_m & = & \theta y_{m-1} - y_{m-2} = (\theta Q_{m-2}(\theta) - Q_{m-3}(\theta)) y_1 = Q_{m-1}(\theta) y_1
\end{eqnarray}
where the equalities follow from the three-term recurrence defining $Q_n(x)$.
On the other hand, the first equation of \eqref{eqn:jacobi} yields 
\begin{equation}
A(X) y_m + y_{m-1} = \theta y_m.
\end{equation} 

Now, suppose $y_1$ is an eigenvector of $A(X)$ with eigenvalue $\lambda$, that is:
\[
	 A(X)y_1 = \lambda y_1.
\]
Applying $y_{m} = Q_{m-1}(\theta) y_1$ from \eqref{eqn:boo-0} and $y_{m-1} = Q_{m-2}(\theta) y_1$ from \eqref{eqn:boo-1},
we have
\begin{equation}
(Q_{m-1}(\theta) A(X) + Q_{m-2}(\theta)I_n) y_1 = \theta Q_{m-1}(\theta) y_1
\end{equation}
which yields
\begin{equation}
0 = (\theta - \lambda) Q_{m-1}(\theta) - Q_{m-2}(\theta) = Q_m(\theta)
\end{equation}
where the last equality follows from the recurrence defining $Q_m(x)$.
So, $\theta$ is a root of $Q_m(x)$.

\medskip
\par\noindent{\em Remark}: From the above discussion, for a given eigenvalue $\lambda$ of $X$, 
we obtain a sequence of orthogonal polynomials which we will denote $(Q_k^\lambda(x))_{k=0}^{m-1}$ 
(to emphasize explicitly its dependence on $\lambda$).

\subsection{Eigenprojectors}

Let $x_\lambda$ be an eigenvector of $X$ with eigenvalue $\lambda$ and let $\theta$ be a root of $Q_m^\lambda(x)$.
Then, $y_\theta \otimes x_\lambda$ is an eigenvector of $Z$ with eigenvalue $\theta$, where
\begin{equation} \label{eqn:y-theta}
	y_{\theta} =
	\begin{pmatrix}
	Q_{m-1}^\lambda(\theta) & Q_{m-2}^\lambda(\theta) & \ldots & Q_0^\lambda(\theta)
	\end{pmatrix}^T.
\end{equation}
For two distinct roots $\theta_1$ and $\theta_2$ of $Q_m^\lambda(x)$, the eigenvectors $y_{\theta_1}$ and $y_{\theta_2}$
are orthogonal by the Christoffel-Darboux theorem (see Theorem \ref{thm:christoffel-darboux}).

\begin{lemma}
If $X$ has distinct eigenvalues, then the eigenvalues of $Z = X \circ P_m^\alpha(1)$ are given by
the eigenvalues of $Q_m^\lambda(x)$, as $\lambda$ ranges over the eigenvalues of $X$.
Moreover, the orthogonal projector corresponding to the eigenvalue $\theta$ of $Q_m^\lambda(x)$ 
is given by
\begin{equation}
	\Xi_\theta = M_\theta \otimes E_\lambda
\end{equation}
where $M_\theta = y_\theta y_\theta^\dagger/\norm{y_\theta}^2$ with $y_\theta$ defined by \eqref{eqn:y-theta},
and $E_\lambda = \ketbra{\lambda}{\lambda}$ is the orthogonal projector corresponding to the $\lambda$-eigenspace of $X$.
\end{lemma}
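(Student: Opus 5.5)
We will show that the vectors $y_\theta\otimes x_\lambda$, with $\lambda$ ranging over the $n$ eigenvalues of $X$ and $\theta$ over the roots of $Q_m^\lambda$, form an orthogonal eigenbasis of $A(Z)$ of size $mn$. The projector statement then follows by grouping basis vectors by eigenvalue.

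\emph{Step 1: eigenvectors.} The computation preceding the lemma, read in reverse, shows that $y_\theta\otimes x_\lambda$ is an eigenvector with eigenvalue $\theta$ whenever $Q_m^\lambda(\theta)=0$. Indeed, the last $m-1$ block rows hold by the three-term recurrence. The first block row reduces to $(\theta-\lambda)Q_{m-1}^\lambda(\theta)-Q_{m-2}^\lambda(\theta)=Q_m^\lambda(\theta)=0$. The vector is nonzero because its last coordinate is $Q_0=1$.

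\emph{Step 2: counting.} The polynomial $Q_m^\lambda$ is, up to a nonzero scalar, the characteristic polynomial of the real symmetric Jacobi matrix \eqref{eqn:jacobi-matrix}. Here $a_m=\lambda$ is real since $A(X)$ is Hermitian, and $\alpha\neq 0$ is real. The off-diagonal entries are nonzero, so this tridiagonal matrix has $m$ distinct real roots. A standard argument gives this: the eigenvector is determined by its last entry, so every eigenspace is one-dimensional. Hence for each $\lambda$ we get $m$ vectors $y_{\theta_1},\dots,y_{\theta_m}$. These are orthogonal by Christoffel--Darboux, or simply because they are eigenvectors of a real symmetric matrix for distinct eigenvalues; note that $y_\theta$ is exactly that matrix's eigenvector.

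For distinct eigenvalues $\lambda\ne\lambda'$ of $X$, the vectors $x_\lambda$ and $x_{\lambda'}$ are orthogonal, so $\langle y\otimes x_\lambda, y'\otimes x_{\lambda'}\rangle=\langle y,y'\rangle\langle x_\lambda,x_{\lambda'}\rangle=0$. Since $X$ has $n$ distinct eigenvalues, we obtain $mn$ mutually orthogonal nonzero eigenvectors, which form a basis of $\mathbb{C}^{mn}$. Therefore the spectrum of $Z$, with multiplicity, is the multiset union of the root sets of the $Q_m^\lambda$.

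\emph{Step 3: projectors.} If the resulting $\theta$'s are all distinct, the projector onto the $\theta$-eigenspace is the rank-one projector onto $y_\theta\otimes x_\lambda$. Using $(u\otimes v)(u\otimes v)^\dagger=uu^\dagger\otimes vv^\dagger$ and $\norm{x_\lambda}=1$, this equals $M_\theta\otimes E_\lambda$.

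The main obstacle is that a root $\theta$ of $Q_m^\lambda$ might coincide with a root of $Q_m^{\lambda'}$ for some $\lambda'\ne\lambda$. In that case the true spectral idempotent is a sum $\sum M_\theta\otimes E_\lambda$ over the pairs $(\lambda,\theta)$ sharing that value. We would handle this by interpreting $\Xi_\theta$ as the projector indexed by the pair $(\lambda,\theta)$, so that $A(Z)=\sum_{\lambda}\sum_{\theta:\,Q_m^\lambda(\theta)=0}\theta\,\Xi_\theta$ is a valid orthogonal resolution of $A(Z)$. The spectral idempotents are then sums of the $\Xi_\theta$. This is all that is needed for the later quarrel/Kronecker computations.

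Alternatively, we could argue genuine distinctness for transcendental $\alpha$. A common root would satisfy $Q_m^\lambda(\theta)-Q_m^{\lambda'}(\theta)=(\lambda'-\lambda)Q_{m-1}(\theta)=0$, since $Q_{m-1}$ does not involve $\lambda$. This would force $Q_{m-1}$ and $Q_m^\lambda$ to share a root. By interlacing for Jacobi matrices, $Q_{m-1}$ is the characteristic polynomial of the submatrix with the first row and column deleted, and its roots strictly interlace those of $Q_m^\lambda$. So no common root exists, and the eigenvalues are genuinely distinct.
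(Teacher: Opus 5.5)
Your proof is correct. It rests on the same construction the paper uses: the product eigenvectors $y_\theta\otimes x_\lambda$ obtained from the three-term recurrence, with the $y_\theta$ mutually orthogonal for fixed $\lambda$. The paper gives no separate proof of the lemma beyond that preceding computation and an appeal to Christoffel--Darboux, so the differences lie in what you add.

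\begin{itemize}
\item \emph{Completeness.} You get completeness of the spectrum by counting $mn$ orthogonal eigenvectors. The paper only shows that product-form eigenvectors have $\theta$ a root of $Q_m^\lambda$, and later reads off the spectrum from the Godsil--McKay factorization $\phi(Z,t)=\prod_k R_{\lambda_k}(t)$.
\item \emph{Orthogonality.} Your argument (each $y_\theta$ is an eigenvector of the real symmetric Jacobi matrix) avoids normalization bookkeeping. The Christoffel--Darboux corollary in the appendix is stated for $b_n=1$, while here $b_2=\alpha$. To use it, one has to notice that the symmetric form of the recurrence yields the unweighted identity anyway.
\item \emph{Distinctness.} Most usefully, you settle whether a root of $Q_m^\lambda$ can also be a root of $Q_m^{\lambda'}$. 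The paper does not explicitly address this, but it is needed for $M_\theta\otimes E_\lambda$ to be the actual spectral idempotent rather than a summand of one. Your identity $Q_m^\lambda-Q_m^{\lambda'}=(\lambda'-\lambda)Q_{m-1}$ settles it, so the pair-indexed fallback becomes unnecessary.
\end{itemize}

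Note that the distinctness step uses nothing about transcendence: it holds for every real $\alpha\neq 0$. Instead of interlacing, you can simply observe that a common root of $Q_{m-1}$ and $Q_{m-2}$ would propagate down the recurrence to $Q_0=1$.
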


\begin{lemma} \label{lemma:shared-quarrels} (Shared Quarrels) \\
Let $X$ be an arbitrary graph and let $Z = X \circ P_m^\alpha(1)$ be the rooted product of $X$
with a path $P_m^\alpha$ on $m$ vertices (with a self-loop at vertex $m$).
Let $a,b$ be vertices of $X$ so that for each eigenvalue $\lambda$ of $X$
\begin{equation} 
E_\lambda \ket{a} = e^{iq_\lambda(a,b)}E_\lambda \ket{b},
\end{equation}
where $q_\lambda(a,b) \in (-\pi,\pi]$.
Then, for each eigenvalue $\theta$ of $Z$ where $\lambda$ is the corresponding eigenvalue of $X$, 
then
\begin{equation}
	\Xi_\theta \ket{j,a} = e^{iq_\lambda(a,b)} \Xi_\theta \ket{j,b},
\end{equation}
for each $j=1,\ldots,m$.
That is, $q_\theta((j,a),(j,b)) = q_\lambda(a,b)$, for each $j=1,\ldots,m$.

\begin{proof}
For each eigenvalue $\theta$ of $Z$, we have
\begin{equation}
	\Xi_\theta \ket{j,a} 
		= M_\theta \ket{j} \otimes E_\lambda \ket{a}
		= e^{iq_\lambda(a,b)} M_\theta \ket{j} \otimes E_\lambda \ket{b}
		= e^{iq_\lambda(a,b)} \Xi_\theta \ket{j,b}.
\end{equation}
\end{proof}
\end{lemma}

\subsection{Multiple state transfer} 

The theorem below provides a family of graphs with multiple state transfer, i.e., pretty good state transfer between multiple nodes. An example graph from this family is given in Figure \ref{fig:circulant-rooted-p3}.
\begin{figure}
    \centering

\begin{tikzpicture}[scale=1.1]
  \node[vertex] (A) at (90:1.4) {};
  \node[vertex] (B) at (210:1.4) {};
  \node[vertex] (C) at (330:1.4) {};

  \node[vertex] (A1) at (90:2.6) {};
  \node[vertex] (A2) at (90:3.8) {};
  \node[vertex, label=center:$a$] (A3) at (90:5.0) {};

  \node[vertex] (B1) at (210:2.6) {};
  \node[vertex] (B2) at (210:3.8) {};
  \node[vertex, label=center:$b$] (B3) at (210:5.0) {};

  \node[vertex] (C1) at (330:2.6) {};
  \node[vertex] (C2) at (330:3.8) {};
  \node[vertex, label=center:$c$] (C3) at (330:5.0) {};

  \draw[dir] (A) to[bend left=18] node[midway, left] {$i$} (B);
  \draw[dir] (B) to[bend left=18] node[midway, above right] {$i$} (C);
  \draw[dir] (C) to[bend left=18] node[midway, right] {$i$} (A);

  \draw[dir] (B) to[bend left=18] node[midway, left] {$-i$} (A);
  \draw[dir] (C) to[bend left=18] node[midway, below right] {$-i$} (B);
  \draw[dir] (A) to[bend left=18] node[midway, right] {$-i$} (C);

  \draw[undir] (A) -- (A1);
  \draw[undir] (A1) -- (A2);
  \draw[undir] (A2) -- node[midway, right] {$\alpha$} (A3);

  \draw[undir] (B) -- (B1);
  \draw[undir] (B1) -- (B2);
  \draw[undir] (B2) -- node[midway, above left] {$\alpha$} (B3);

  \draw[undir] (C) -- (C1);
  \draw[undir] (C1) -- (C2);
  \draw[undir] (C2) -- node[midway, above right] {$\alpha$} (C3);
\end{tikzpicture}

    \caption{Example of a graph satisfying \ref{thm:main}. Note that $C$ must be directed if it has more than 2 vertices, as no such graph with a real symmetric adjacency matrix can have universal perfect state transfer \cite{Kay2011}. By \ref{thm:main}, this graph will have PGST between nodes $a$ and $b$, $b$ and $c$, and $c$ and $a$.}
    \label{fig:circulant-rooted-p3}
\end{figure}

\begin{theorem} \label{thm:main}
Let $X$ be a circulant of order $n$ which has universal perfect state transfer and $\alpha$ be any transcendental real number.
Then, the rooted product $Z = X \circ P_m^\alpha(1)$ has pretty good state transfer between every pair of pendant 
vertices adjacent to the $\alpha$-weighted edges.

\begin{proof}
We use the machinery employed in \cite{kempton2017pretty} and \cite{eisenberg2019pretty}. 
We will prove pretty good state transfer from $(m,a)$ to $(m,b)$ in $Z$ for arbitrary vertices $a,b$ of $X$.
Suppose $A(Z) = \sum_\theta \theta \Xi_\theta$ is the spectral decomposition of $Z$.
We must show that for any $\varepsilon > 0$ there is $\tau$ so that
\begin{eqnarray}
1 - \varepsilon
	& \le & |\bra{m,b} e^{-iA(Z)\tau}\ket{m,a}| \\
	& = & |\sum_\theta e^{-i\theta\tau} \bra{m,b} \Xi_\theta \ket{m,a}| \\
	& = & |\sum_\theta e^{i(q_\theta((m,a),(m,b))-\theta\tau)} \bra{m,b} \Xi_\theta \ket{m,b}|.
\end{eqnarray}
Thus, we need to show there is $\tau$ so that
\begin{equation} \label{eqn:rooted-quarrel}
	|\theta\tau - q_\theta((m,a),(m,b))| < \epsilon \pmod{2\pi}
\end{equation}
for each eigenvalue $\theta$ of $Z$.
By Lemma \ref{lemma:shared-quarrels}, the quarrels of $Z$ are inherited from the quarrels of $C$ 
because $q_\theta((m,a),(m,b)) = q_\lambda(a,b)$ where 
$\lambda$ is an eigenvalue of $C$.
Let $C$ be the (possibly complex-weighted) adjacency matrix of $X$ and note that $C$ is a Hermitian circulant matrix. 

\bigskip

By Theorem \ref{thm:charpoly-rooted}, the characteristic polynomial of $Z$ is given by 
\begin{eqnarray}
\phi(Z,t) 
	& = & \det( \phi(P_m^\alpha,t) I_n - \phi(P_{m-1}^\alpha,t) C) \\
	& = & \det( \phi(P_m^\alpha,t) I_n - \phi(P_{m-1}^\alpha,t) \Lambda)
\end{eqnarray}
where $\Lambda=\diag(\lambda_0,\lambda_1,\dots,\lambda_{n-1})$ is the diagonal matrix consisting of the eigenvalues of $C$.  Note that $\Lambda = F_n^\dagger CF_n$ where $F_n$ is the discrete Fourier transform matrix of order $n$.
Therefore, the characteristic polynomial for $Z$ is given by
\begin{equation}
	\phi(Z,t) = \prod_{k=0}^{n-1} 
		(\phi(P_m^\alpha,t) - \lambda_k\phi(P_{m-1}^\alpha,t)).
\end{equation}
So, $\phi(Z,t)$ factors into $n$ polynomials from $\QQ(\alpha)[t]$.
We will denote the above polynomials as
$R_{\lambda_k}(t) = \phi(P_m^\alpha,t) - \lambda_k\phi(P_{m-1}^\alpha,t)$ 
and analyze $\phi(Z,t) = \prod_{k} R_{\lambda_k}(t)$.

\bigskip
\par\noindent{\em Notation}:
Each eigenvalue $\theta$ of $A(Z)$ shares a quarrel with and eigenvalue $\lambda$ of $C$.  
Thus, we will denote the eigenvalues of $Z$ corresponding to eigenvalue $\lambda_k$ of $C$
as $\theta_{k,j}$, where $j=1,\ldots,m$.  Note that $\theta_{k,j}$ for $j=1,\dots,m$ have quarrel $q_{\lambda_k}(a,b)$ and and are roots of the factor $R_{\lambda_k}(t)$ of the characteristic polynomial defined above.
\bigskip

By Lemma \ref{lemma:shared-quarrels}, for each $k$ we have
\[
	\Xi_{k,j}\ket{m,a} = e^{i q_{\lambda_k}(a,b)} \Xi_{k,j}\ket{m,b},
	\ \ \ j=1,\ldots,m.
\]
So, Equation \eqref{eqn:rooted-quarrel} can be restated as 
\begin{equation} \label{eqn:derived-rooted-quarrel}
	|\theta_{k,j}\tau - q_{\lambda_k}(a,b)| < \epsilon \pmod{2\pi},
	\ \ \mbox{ for $j=1,\ldots,m$.}
\end{equation}
for each $k=0,\dots,n$. Recall that from (\ref{eq:quarrel}), the circulant quarrels are given by 
\begin{equation}\label{eq:quarrel}
	q_{\lambda_k}(a,b) = 
	\frac{2\pi(b-a)k}{n}.
\end{equation}

As $C$ is a circulant with universal perfect state transfer, then by \cite[Theorem 22]{cfghst14}, we know that the eigenvalues of $C$ take the form
\[
\lambda_k = \beta + \gamma(kh+c_kn)
\]
for some $\beta,\gamma\in\RR$, $c_k\in\ZZ$, and $h\in\ZZ$ with $h$ relatively prime to $n$.  By making a diagonal shift, we may assume without loss of generality that $\beta=0$, and by adjusting the time parameter $t$ in $e^{-itC}$, we may assume  that $\gamma=1$ (see the discussion after Theorem 22 in \cite{cfghst14}).  Thus, without loss of generality, we may assume that
\begin{equation}\label{eq:evals}
\lambda_k = kh+c_kn\end{equation} where $\gcd(h,n)=1$ and $c_k\in\ZZ$.

\bigskip
To apply Kronecker's theorem, we assume that
\begin{equation} \label{eqn:kronecker}
	\sum_{k=0}^{n-1} \sum_{j=1}^{m} \ell_{k,j} \theta_{k,j} = 0
\end{equation}
where $\{\theta_{k,j} : j=1,\ldots,m\}$ are the roots of $R_{\lambda_k}(t)$. 
Then, we must show that
\[
	 \sum_{k=0}^{n-1} q_{\lambda_k}(a,b) \times \left(\sum_{j=1}^{m} \ell_{k,j} \right) 
	\equiv 0\pmod{2\pi}.
\]
From (\ref{eq:quarrel}), this becomes
\[
	 \sum_{k=0}^{n-1} \frac{2\pi(b-a)k}{n} \times \left(\sum_{j=1}^{m} \ell_{k,j} \right) 
	\equiv 0\pmod{2\pi}.
\]
Equivalently, it suffices to show
\begin{equation} \label{eqn:n-divisible}
	\sum_{k=0}^{n-1} k \left(\sum_{j=1}^{m} \ell_{k,j} \right) \equiv 0\pmod{n}.
\end{equation}

Let $F = \QQ(\alpha)$ be a field extension of the rationals.
Recall the the $\theta_{k,j}$ for $j=1,\dots,m$ are roots of the polynomials $R_{\lambda_k}(t)$ where we defined
\[
R_{\lambda_k}(t) = \phi(P_m^\alpha,t) - \lambda_k\phi(P_{m-1}^\alpha,t).
\]
Let $p_m(t)$ be the characteristic polynomial of the path $P_m$ on $m$ vertices.  Then Laplace expansion of the determinant yields that $\phi(P_m^\alpha,t) = tp_{m-1}(t) - \alpha^2p_{m-2}(t)$.
Thus, 
\begin{align}
	R_{\lambda_k}(t) &= t(p_{m-1}(t) -\lambda_k p_{m-2}(t))  - \alpha^2 (p_{m-2}(t) -\lambda_k p_{m-3}(t)).
\end{align}
Recall also from (\ref{eq:evals}) that we are assuming that $\lambda_k = kh+c_kn$ where $h,c_k\in\ZZ$, $k=0,\dots,n-1$.  In particular, each $\lambda_k\in\ZZ$.  We will show that these polynomials are irreducible.

\begin{lemma}
	\label{lemma:transcendental_q_implies_irreducibility}
	Let $p$ and $q$ be polynomials in $\QQ$ and $\alpha$ be transcendental over $\QQ$. Suppose that $(p, q) = 1$. Then $p + \alpha^2q$ is irreducible over $F=\QQ(\alpha)$. 
\end{lemma}
\begin{proof}
	Let $p$ and $q$ be polynomials in $\QQ$ and $\alpha$ be transcendental over $\QQ$. Suppose that $(p, q) = 1$. 
	
	

    Since $(p,q)=1$, $p$ and $q$ do not share a common factor, so any factorization is in the form of $p+\alpha^2q=(a+\alpha p)(c+\alpha d)$ where $a,b,c,d$ are polynomials in $\QQ$.  Then $ac=p$, $ad+bc=0$, and  $bd=q$.
	
	Suppose that $a$ is not a scalar multiple of $c$. Then $a$ and $b$ share a factor since $ad = -bc$. Thus, $ac$ and $bd$ share a factor. However, this can't happen since $ac = p$ and $bd = q$ and we assumed that $(p, q) = 1$. 
	
	Suppose that $a = kc$. Then $ad = kcd = - bc$ or $b = -kd$. So then we have that $(a + \alpha b)(c + \alpha d) = k(c^2 + \alpha^2d^2) = p + \alpha^2q$. But then $p$ and $q$ share a factor of $k$, a contradiction.
\end{proof}

\begin{lemma}
The polynomials $R_\lambda(t) \in F[x]$ are irreducible over $F$ for $\lambda$ ranging through the eigenvalues of $C$.

\begin{proof}

Let $R_\lambda(t) = t Q^\lambda_{m-1}(t) - \alpha^2 Q^\lambda_{m-2}(t)$, where  and $Q^\lambda_{\ell}(t) = p_\ell(t)-\lambda p_{\ell-1}(t)$. 
Note $Q^\lambda_{m-1}(t)$ and $Q^\lambda_{m-2}(t)$ are coprime (see Theorem \ref{thm:interlacing} of Appendix \ref{sec:app_orthpolyn}), and indeed $tQ^\lambda_{m-1}(t)$ and $Q^\lambda_{m-2}(t)$ are coprime (as the latter has only positive roots).  Thus by Lemma \ref{lemma:transcendental_q_implies_irreducibility}, each $R_\lambda$ is irreducible.

\end{proof}
\end{lemma}

The remainder of the proof will follow the machinery of \cite{eisenberg2019pretty}, making use of the field trace of a field extension.   See Appendix \ref{sec:galois} for details around Galois theory and the field trace.
Let $F_{k}$ be the splitting fields of $R_{\lambda_k}(t)$ over $F$, $k=0,\dots,n-1$.
These are Galois extensions since they are finite and normal (because they are splitting fields of the 
corresponding polynomials) and they are separable (since every irreducible polynomial is separable 
over fields of characteristic zero). 
Let $M$ be the smallest field extension of $F$ which contains all the  $F_{k}$, $k=1,\dots,n-1$.
This composite extension $M$ is also a Galois extension of $F$ (see \cite{df}, page 592).

\bigskip
\par\noindent{\em Notation}:
Given a polynomial $p(t) = a_d t^d + a_{d-1} t^{d-1} + \ldots + a_0$, 
we denote the coefficient of $p(t)$ corresponding to $t^j$ as $a_j = [t^j]p(t)$. We will focus primarily on the degree $d-1$ coefficient $[t^{d-1}]p(t)$ (the ``trace" of the polynomial).  

We apply the field trace to \eqref{eqn:kronecker} to get
\begin{eqnarray}
0 & = & \Tr_{M/F}\left( \sum_{k=0}^{n-1} \sum_{j=1}^{m} \ell_{k,j} \theta_{k,j} \right) \\
	& = &  \sum_{k=0}^{n-1} \left( \sum_{j=1}^{m} \ell_{k,j} \Tr_{M/F}(\theta_{k,j}) \right) \\
	& = & \sum_{k=0}^{n-1} [M:F_k] \sum_{j=1}^{m} \ell_{k,j} \Tr_{F_k/F}(\theta_{k,j}).
\end{eqnarray}
By the properties of the field trace, for each root  $\theta_{k,j}$ of $R_{\lambda_k}(t)$, $k=0,\ldots,n-1$, we have
\begin{equation}
	\Tr_{F_k/F}(\theta_{k,j}) 
	= \frac{[F_k:F]}{\deg(R_{\lambda_k})}[t^{m-1}]R_{\lambda_k}(t)
	= \frac{[F_k:F]}{m} (- \lambda_k)
\end{equation}
because the trace of the polynomials $R_{\lambda_k}$ are given by $[t^{m-1}]R_{\lambda_k}(t) = -\lambda_k$. 

Applying this to the preceding equality and using (\ref{eq:evals}), we obtain
\begin{align*}
0 &= \frac{[M:F]}{m} \left( \sum_{k=0}^{n-1} (-\lambda_k )\sum_{j=1}^{m} \ell_{k,j}
	\right)\\
    &= -\frac{[M:F]}{m}\left( \sum_{k=0}^{n-1} (kh+c_kn )\sum_{j=1}^{m} \ell_{k,j}.
	\right)
\end{align*}
Since $h$ is relatively prime to $n$, we conclude that,
\begin{equation}
	\sum_{k=0}^{n-1} k \left(\sum_{j=1}^{m} \ell_{k,j}\right) \equiv 0\pmod{n}
\end{equation}
which proves \eqref{eqn:n-divisible}, and the proof is complete.
\end{proof}
\end{theorem}

As an immediate corollary, we have a result for paths of even length (see Figure \ref{fig:even_path}).

\begin{corollary}\label{cor:even_path}
    For any path with an even number of vertices, if we place a transcendental edge weight on the edges incident to the endpoints, then there is PGST between the end vertices.
    \begin{proof}
        This follows immediately from the previous theorem taking the base graph $X$ to be the path on 2 vertices, which is a circulant with perfect state transfer (and hence universal state transfer since there are only two vertices).
    \end{proof}
\end{corollary}

\begin{figure}[h]
    \centering

\begin{tikzpicture}[scale=1.1]
  \node[vertex] (A) at (0:1.4) {};
  \node[vertex] (B) at (180:1.4) {};
 
  \node[vertex] (A1) at (0:2.6) {};
  \node[vertex] (A2) at (0:3.8) {};
  \node[vertex, label=center:$b$] (A3) at (0:5.0) {};

  \node[vertex] (B1) at (180:2.6) {};
  \node[vertex] (B2) at (180:3.8) {};
  \node[vertex, label=center:$a$] (B3) at (180:5.0) {};

  \draw[undir] (A) -- (A1);
  \draw[undir] (A1) -- (A2);
  \draw[undir] (A2) -- node[midway, above] {$\alpha$} (A3);

  \draw[undir] (B) -- (B1);
  \draw[undir] (B1) -- (B2);
  \draw[undir] (B2) -- node[midway, above] {$\alpha$} (B3);

 \draw[undir] (A) -- (B);
\end{tikzpicture}

    \caption{Illustration of Corollary \ref{cor:even_path}. For transcendental $\alpha$, this will exhibit PGST between nodes $a$ and $b$.}
    \label{fig:even_path}
\end{figure}

\section{Conclusion and open questions}\label{sec:concl}

We end with some discussion of open questions.  In light of Corollary \ref{cor:even_path}, it is natural to ask if there is PGST in the same situation when the path has an odd number of vertices.

\begin{question}
    If we add a transcendental edge weight to the end edges of a path with an odd number of vertices, do we necessarily have PGST?
\end{question}

This does not follow from Theorem \ref{thm:main} in the same way that Corollary \ref{cor:even_path} did since an odd length path does not have the structure of a rooted product where the base graph is a circulant.  Indeed, if one attempts to follow the same proof strategy for odd length paths, the polynomials obtained that are analogous to the $R_\lambda(t)$ end up having trace 0, so the Galois theory argument used in attempting to use Kronecker's Theorem (Theorem \ref{thm:kronecker}) will not be informative. Some new tools will need to be developed to tackle this case.

\section*{Acknowledgments}

A.B. and C.T. were supported by NSF ExpandQISE grant 2427020.

	\bibliographystyle{plain}
	\bibliography{refs}

\appendix
    
\section{Galois Theory}\label{sec:galois}

We summarized some relevant facts from Galois theory. 
Let $E$ be a Galois extension of a field $F$ and let $\Gal(E/F)$ denote the Galois group of this extension.
The {\em trace} of an element $\alpha \in E$ is defined as
\begin{equation}
	\Tr_{E/F}(\alpha) = \sum_{\sigma \in \Gal(E/F)} \sigma(\alpha).
\end{equation}
This trace map has the following useful properties (see \cite{df,ft}, for example).

\begin{theorem} \label{thm:galois} 
Let $E/F$ be a Galois extension. The following properties hold:
\begin{enumerate}[(i)]
\item For all $\alpha \in E$, $\Tr_{E/F}(\alpha) \in F$.

\item For all $a \in F$, $\Tr_{E/F}(a) = [E:F]a$.

\item (linearity) For all $\alpha,\beta \in E$, 
	$\Tr_{E/F}(\alpha + \beta) = \Tr_{E/F}(\alpha) + \Tr_{E/F}(\beta)$.

\item (composability) If $F \subset K \subset E$ are extension fields, 
	then $\Tr_{E/F}(\alpha) = \Tr_{E/K}(\Tr_{K/F}(\alpha))$.

\item Let the minimal polynomial $m(x)$ of $\alpha \in E$ be of degree $d$.
	Then, $\Tr_{E/F}(\alpha) = -\frac{[E:F]}{d} [x^{d-1}]m(x)$.
\end{enumerate}

\begin{proof}
We sketch the proofs for completeness.
\begin{enumerate}[(i)]
\item Because $\sum_{\sigma \in \Gal(E/F)} \sigma(\alpha)$ is invariant under the action of the Galois group $\Gal(E/F)$,
	it must belong to the fixed field of $\Gal(E/F)$, which is $F$.

\item This is because $|\Gal(E/F)| = [E:F]$ and each element of $\Gal(E/F)$ fixes $F$.

\item Each element of the Galois group $\Gal(E/F)$ is a field $F$-automorphism.

\item Each element $\sigma$ of $\Gal(E/F)$ can be viewed as a composition of an element $\rho$ of $\Gal(E/K)$ 
	with an element $\tau$ of $\Gal(K/F)$ (see \cite{ft}, page 15). Therefore, $\Tr_{E/F}(\alpha)$ is given by
	\begin{equation}
	\sum_{\sigma \in \Gal(E/F)} \sigma(\alpha) 
		= \sum_{\substack{\rho \in \Gal(E/K)\\ \tau \in \Gal(K/F)}} \tau(\rho(\alpha)) 
		= \sum_{\tau \in \Gal(K/F)} \tau\left(\sum_{\rho \in \Gal(E/K)} \rho(\alpha) \right) 
	\end{equation}
	which yields $\Tr_{K/F}(\Tr_{E/K}(\alpha))$. 

\item Let $K = F(\alpha)$. Because $K$ is the splitting field of $m(x)$, we have
	$m(x) = \prod_{\sigma \in \Gal(K/F)} (x - \sigma(\alpha))$.
	So, $-\Tr_{K/F}(\alpha)$ is the second leading coefficient of $m(x)$.
	Now we apply (iv) and (ii) to obtain 
	\begin{equation}
	\Tr_{E/F}(\alpha) = \Tr_{K/F}(\Tr_{E/K}(\alpha)) = [E:K]\Tr_{K/F}(\alpha)
	\end{equation}
	which yields the claim because $[E:F] = [E:K][K:F]$ and $[K:F] = d$.

\end{enumerate}
\end{proof}
\end{theorem}


\section{Orthogonal Polynomials}\label{sec:app_orthpolyn}

We describe relevant facts about orthogonal polynomials (see Chihara \cite{chihara}). 
Let $(\mu_n)_{n=0}^{\infty}$ be a sequence of numbers and let $\EE$ be a real-valued
function defined on the vector space of all polynomials by
$\EE[x^n] = \mu_n$, for $n=0,1,\ldots$, and
$\EE[a_1 p_1(x) + a_2 p_2(x)] = a_1\EE[p_1(x)] + a_2\EE[p_2(x)]$ for all real numbers
	$a_1$ and $a_2$, and all polynomials $p_1(x)$ and $p_2(x)$.
Then, $\EE$ is called a {\em moment functional} defined by $(\mu_n)_{n=0}^{\infty}$. 
A moment functional $\EE$ is called {\em positive definite} on $E \subset (-\infty,\infty)$ if
$\EE[f] > 0$ for all real polynomial $f$ which is non-negative on $E$ (and does not vanish
identically on $E$). The set $E$ is called the suppor of $\EE$.

A sequence of polynomials $(Q_n(x))_{n=0}^{\infty}$ is called an {\em orthogonal polynomial sequence} 
with respect to a moment functional $\EE$ if for all non-negative integers $m$ and $n$ we have
\begin{enumerate}[(i)]
\item $Q_n(x)$ is a polynomial of degree $n$;
\item $\EE[Q_m(x)Q_n(x)] = 0$ for $m \neq n$;
\item $\EE[Q_n^2(x)] \neq 0$. (It is more convenient to assume strictly positive)
\end{enumerate}
We typically assume that the leading coefficient of $Q_n(x)$ is positive (or one in the monic case).

\begin{theorem} \label{thm:favard}
(Favard, see \cite{chihara})
Let $(a_n)_{n=1}^{\infty}$ and $(b_n)_{n=1}^{\infty}$ be arbitrary sequences of numbers.
Let $(Q_n(x))_{n=0}^{\infty}$ be a sequence of polynomials defined by the recurrence
\begin{equation} \label{eqn:three-term}
Q_n(x) = (x - a_n)Q_{n-1}(x) - b_n Q_{n-2}(x), \ \ \mbox{$n=1,2,\ldots$}
\end{equation}
with $Q_0(x) = 1$ and $Q_{-1}(x) = 0$.
Then, there is a unique moment functional $\EE$ so that
$\EE[1] = 1$ and $\EE[Q_m(x)Q_n(x)] = 0$ for $m \neq n$ with $m,n = 0,1,\ldots$.
Moreover, $\EE$ is positive definite if and only if $a_n$ is real and $b_n > 0$ (for $n \ge 1$).
\end{theorem}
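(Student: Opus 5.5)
The plan is to construct the functional directly on the basis $(Q_n)$ and then check the orthogonality and positivity conditions. Since $Q_n$ is monic of degree $n$ by the recurrence \eqref{eqn:three-term}, the family $(Q_n(x))_{n\ge 0}$ is a basis of the space of polynomials. I will define $\EE$ as the unique linear functional with $\EE[Q_0]=1$ and $\EE[Q_n]=0$ for $n\ge 1$. This choice is forced, which gives uniqueness: any $\EE$ with $\EE[1]=1$ and $\EE[Q_0Q_n]=0$ for $n\ge1$ must agree with it on a basis. The moments are then $\mu_n=\EE[x^n]$, obtained by expanding $x^n$ in the $Q$-basis.

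For orthogonality, I will rewrite the recurrence as
\[
xQ_{j}(x)=Q_{j+1}(x)+a_{j+1}Q_j(x)+b_{j+1}Q_{j-1}(x).
\]
I then prove by induction on $k$ that $\EE[x^kQ_n]=0$ for all $0\le k<n$. The case $k=0$ is the definition. For the inductive step, write $x^{k}Q_n=x^{k-1}(Q_{n+1}+a_{n+1}Q_n+b_{n+1}Q_{n-1})$ and apply the hypothesis for $k-1$ to each of the three terms. Since $Q_m$ has degree $m$, this gives $\EE[Q_mQ_n]=0$ for $m\neq n$. The same manipulation gives $\EE[x^nQ_n]=b_{n+1}\EE[x^{n-1}Q_{n-1}]$, hence
\[
\EE[Q_n^2]=\EE[x^nQ_n]=b_2b_3\cdots b_{n+1}.
\]
Condition (iii) of an orthogonal polynomial sequence therefore needs $b_n\neq 0$ for $n\ge 2$. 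I will note that this nondegeneracy hypothesis is implicit in the statement; $b_1$ plays no role since $Q_{-1}=0$.

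For the positivity direction, suppose the $a_n$ are real and the $b_n>0$. Then every $Q_n$ is real and $\EE[Q_n^2]>0$. A real polynomial $f\ge 0$ on $\RR$ that is not identically zero can be written as $f=p^2+q^2$ with $p,q$ real polynomials, by pairing conjugate roots and using that real roots have even multiplicity. Expanding $p=\sum c_kQ_k$ with real $c_k$ gives $\EE[p^2]=\sum c_k^2\EE[Q_k^2]\ge 0$, and similarly for $q$; the sum is strictly positive because $p$ and $q$ are not both zero.

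For the converse, assume $\EE$ is positive definite. Then the Gram--Schmidt process applied to $1,x,x^2,\dots$ with respect to the real inner product $\langle f,g\rangle=\EE[fg]$ produces real monic orthogonal polynomials. Monic orthogonal polynomials for a functional are unique, so these coincide with the $Q_n$, and each $Q_n$ is real. Multiplying the displayed identity by $Q_{j}$ and then by $Q_{j-1}$ and applying $\EE$ gives
\[
a_{j+1}=\frac{\EE[xQ_j^2]}{\EE[Q_j^2]}\in\RR,\qquad b_{j+1}=\frac{\EE[Q_j^2]}{\EE[Q_{j-1}^2]}>0 .
\]
I expect the main obstacle to be this converse direction, specifically justifying that positive definiteness forces the $Q_n$ to be real. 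This relies on the uniqueness of monic orthogonal polynomials, which needs the Hankel determinants of the moments to be nonzero, and that in turn follows from positive definiteness.
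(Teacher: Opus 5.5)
Your proof is correct, and it is essentially the standard argument in Chihara, which the paper cites without giving a proof of its own: you define $\EE$ on the basis $(Q_n)$, obtain $\EE[x^kQ_n]=0$ for $k<n$ and $\EE[Q_n^2]=b_2\cdots b_{n+1}$ by induction from $xQ_j=Q_{j+1}+a_{j+1}Q_j+b_{j+1}Q_{j-1}$, and get positivity from the representation $f=p^2+q^2$. You should say outright that your converse yields $b_n>0$ only for $n\ge 2$. That is all that can hold under the normalization $\EE[1]=1$, because $b_1$ never enters: for instance $a_n=0$, $b_1=-1$, $b_n=1$ for $n\ge 2$ gives the positive definite semicircle functional. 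So ``$b_n>0$ for $n\ge 1$'' must be read with the convention $b_1=1$ adopted in Section~\ref{sec:prelim}, or with Chihara's normalization $\EE[1]=b_1$. Similarly, no hypothesis $b_n\neq 0$ is needed for the first assertion, since it only claims $\EE[Q_mQ_n]=0$ for $m\neq n$, and your construction gives this for arbitrary $b_n$.
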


\begin{theorem} \label{thm:christoffel-darboux}
(Christoffel-Darboux, see \cite{chihara})
Let $(Q_n(x))$ be a sequence of polynomials which satisfy \eqref{eqn:three-term} with $b_n \neq 0$. Then
\begin{equation} \label{eqn:orthogonality}
\sum_{k=0}^{m-1} \frac{Q_k(x)Q_k(y)}{b_1 b_2 \ldots b_{k+1}} 
	=
	\frac{1}{b_1 b_2 \ldots b_n} 
	\frac{Q_{m}(x)Q_{m-1}(y) - Q_{m-1}(x)Q_{m}(y)}{x-y}.
\end{equation}
\end{theorem}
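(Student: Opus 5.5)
The plan is the standard telescoping argument: compute the antisymmetric ``Wronskian-type'' combination of consecutive polynomials, get a first-order recursion for it from \eqref{eqn:three-term}, and sum. I read the normalizing product on the right-hand side as $b_1b_2\cdots b_m$; the index $n$ in the display appears to be a typo for $m$.

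\emph{Step 1 (one-step identity).} For $k\ge 0$ set
\[
D_k(x,y) = Q_{k+1}(x)Q_k(y) - Q_k(x)Q_{k+1}(y),
\]
so that $D_{-1}=0$ because $Q_{-1}=0$. Write the recurrence \eqref{eqn:three-term} at index $k+1$ in the variable $x$ and multiply it by $Q_k(y)$. Write it again in the variable $y$ and multiply by $Q_k(x)$. Subtracting the two, the $a_{k+1}$ terms cancel, and we obtain
\[
D_k = (x-y)\,Q_k(x)Q_k(y) + b_{k+1}\,D_{k-1}.
\]

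\emph{Step 2 (normalize and telescope).} Since all $b_j\neq 0$, divide by $b_1b_2\cdots b_{k+1}$:
\[
\frac{D_k}{b_1\cdots b_{k+1}} - \frac{D_{k-1}}{b_1\cdots b_k} = (x-y)\frac{Q_k(x)Q_k(y)}{b_1\cdots b_{k+1}}.
\]
For $k=0$ the second term on the left is read as $D_{-1}/1=0$. Sum over $k=0,\dots,m-1$. The left side telescopes to $D_{m-1}/(b_1\cdots b_m)$, with the initial term vanishing since $D_{-1}=0$. Dividing by $x-y$ for $x\neq y$ gives exactly the claimed identity:
\[
\sum_{k=0}^{m-1}\frac{Q_k(x)Q_k(y)}{b_1\cdots b_{k+1}} = \frac{1}{b_1\cdots b_m}\,\frac{Q_m(x)Q_{m-1}(y)-Q_{m-1}(x)Q_m(y)}{x-y}.
\]
The $x=y$ case follows by continuity, or equivalently as the polynomial identity before division.

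\emph{Main obstacle.} There is no real difficulty here. The only care needed is bookkeeping of the normalization. The value of $b_1$ only multiplies $Q_{-1}=0$, so any nonzero convention, such as $b_1=1$ as in the body of the paper, works. The product in the denominator must match the top index $m$ of the sum. As a consistency check, the case $m=1$ reduces to $Q_1(x)-Q_1(y)=x-y$ times $1/b_1$ on both sides, which holds since $Q_1(x)=x-a_1$.
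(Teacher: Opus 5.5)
Your telescoping argument via $D_k=(x-y)Q_k(x)Q_k(y)+b_{k+1}D_{k-1}$ is correct and is the standard proof in Chihara, which the paper cites rather than proving the identity itself. You are also right that the product $b_1 b_2 \cdots b_n$ on the right-hand side must be read as $b_1 b_2 \cdots b_m$.
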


\par\noindent
A corollary of Theorem \ref{thm:christoffel-darboux} 
when $x$ and $y$ are two distinct roots of $Q_m$ where $b_n = 1$, for all $n$, is
\begin{equation}
\sum_{k=0}^{m-1} Q_k(x)Q_k(y) = 0.
\end{equation}

\begin{theorem}
Let $E$ be the support of $\EE$. The zeros of $Q_n(x)$ are all real, simple
and are located in the interior of $E$.
\end{theorem}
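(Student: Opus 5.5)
We would prove this by the classical sign-change argument, assuming (as in the positive definite case of Theorem~\ref{thm:favard}) that $\EE$ is positive definite on its support $E$, so that $\EE[f]>0$ for every real polynomial $f$ that is non-negative on $E$ and not identically zero there. The only other ingredient is that $Q_0,\dots,Q_{n-1}$ have degrees $0,\dots,n-1$. Hence they span the polynomials of degree less than $n$, and orthogonality gives $\EE[p\,Q_n]=0$ for every polynomial $p$ with $\deg p<n$.

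First we show that $Q_n$ (with $n\ge 1$) changes sign at least once in the interior of $E$. Indeed, $\EE[Q_n]=\EE[Q_nQ_0]=0$. If $Q_n$ did not change sign on $E$, then either $Q_n$ or $-Q_n$ would be non-negative on $E$. It would also not vanish identically on $E$, since $E$ is infinite for a positive definite functional: otherwise a nonzero polynomial vanishing on $E$ would have $\EE$-value $0$. Positive definiteness would then force $\EE[Q_n]\neq 0$, a contradiction.

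Next, let $x_1<\dots<x_k$ be the distinct points in the interior of $E$ at which $Q_n$ changes sign, that is, the zeros of odd multiplicity there. Set $p(x)=\prod_{i=1}^k (x-x_i)$. Then $p\,Q_n$ has constant sign on $E$, because every sign change of $Q_n$ inside $E$ is cancelled by a factor of $p$. It is also not identically zero on $E$. So, up to a global sign, $\EE[p\,Q_n]>0$. If $k<n$, then $\deg p<n$ and $\EE[p\,Q_n]=0$, a contradiction. Therefore $k\ge n$. Since $\deg Q_n=n$, we get $k=n$: $Q_n$ has $n$ distinct zeros, all real, all in the interior of $E$, and each of multiplicity exactly one, since the multiplicities sum to at most $n$.

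The main obstacle is technical rather than conceptual. We must make precise ``interior of $E$'' and ensure that $p\,Q_n$ does not vanish identically on $E$, which uses the fact that $E$ is infinite. When $E$ is not an interval, ``interior'' should be read as the interior of the convex hull of $E$. As a cross-check for the finite Jacobi setting actually used in the paper, we would also note the linear-algebra route. The zeros of $Q_n$ are the eigenvalues of the $n\times n$ real symmetric tridiagonal matrix \eqref{eqn:jacobi-matrix} with nonzero off-diagonal entries $b_k$, so they are real. Each eigenvalue is simple, because the eigenvector is determined by its last coordinate through the recurrence, as in the derivation preceding \eqref{eqn:y-theta}. This independently gives reality and simplicity.
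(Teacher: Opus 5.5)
Your proposal is correct and is exactly the classical sign-change proof from Chihara; the paper gives no proof of its own here and simply cites Chihara for this statement, so your argument matches the intended one. You are also right to read ``interior of $E$'' as the interior of the convex hull of $E$ (Chihara's true interval of orthogonality), since the literal interior is empty for a discrete support such as $\{0,1,2,\dots\}$.
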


\par\noindent
We denote the zeros of $Q_n(x)$ by $x_{n,i}$ where $i=1,\ldots,n$ and
\begin{equation}
x_{n,1} < x_{n,2} < \ldots < x_{n,n}.
\end{equation}
\ignore{
Since the leading coefficient of $Q_n(x)$ is positive, we have
$Q_n(x) > 0$ for $x > x_{n,n}$ and $\sgn(Q_n(x)) = (-1)^n$ for $x < x_{n,1}$,
where $\sgn(z) = 1$ if $z > 0$, $0$ if $z = 0$ and $-1$ if $z < 0$.
}

\begin{theorem}\label{thm:interlacing} (Strict Interlacing)
The zeros of $Q_n(x)$ and $Q_{n+1}(x)$ mutually separate each other; that is,
$x_{n+1,i} < x_{n,i} < x_{n+1,i+1}$, for $i=1,2,\ldots,n$.
\end{theorem}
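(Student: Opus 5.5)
The plan is to derive strict interlacing from the confluent (diagonal) form of the Christoffel--Darboux identity, Theorem \ref{thm:christoffel-darboux}, under the positivity hypothesis of Favard's theorem (Theorem \ref{thm:favard}): $a_n$ real and $b_n>0$. These are exactly the hypotheses under which the zeros of each $Q_n$ are real and simple, as in the preceding theorem.

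\emph{Step 1: a sign-definite Wronskian.} Let $y\to x$ in the Christoffel--Darboux identity with $m=n+1$. Dividing the numerator by $x-y$ and taking the limit gives
\[
\sum_{k=0}^{n}\frac{Q_k(x)^2}{b_1\cdots b_{k+1}} \;=\; c_n\bigl(Q_{n+1}'(x)Q_n(x)-Q_n'(x)Q_{n+1}(x)\bigr),
\]
where $c_n>0$ is a product of the $b_i^{-1}$. The left side is strictly positive for every real $x$, because all $b_i>0$ and the $k=0$ term is $Q_0^2=1$. Hence
\[
W(x):=Q_{n+1}'(x)Q_n(x)-Q_n'(x)Q_{n+1}(x)>0 \quad\text{for all real } x.
\]
I expect the main care to be needed here. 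One must check that the normalization of the recurrence in Theorem \ref{thm:favard} matches the form of Christoffel--Darboux used, so that the constant really is positive. If it does not, the fallback is to reprove the identity directly by induction from the three-term recurrence: multiply the recurrences at $x$ and $y$, subtract, and telescope.

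\emph{Step 2: no common zeros.} If $Q_n(x_0)=Q_{n+1}(x_0)=0$ for some $x_0$, then $W(x_0)=0$, contradicting Step 1. So the zeros of $Q_n$ and $Q_{n+1}$ are disjoint, which is what will make the inequalities strict.

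\emph{Step 3: counting sign changes.} At a zero $x_{n+1,i}$ of $Q_{n+1}$ we have $W=Q_{n+1}'Q_n$, so $Q_n(x_{n+1,i})$ has the same sign as $Q_{n+1}'(x_{n+1,i})$. Since the zeros of $Q_{n+1}$ are simple, $Q_{n+1}'$ alternates in sign at consecutive zeros $x_{n+1,i}<x_{n+1,i+1}$. Hence $Q_n$ alternates in sign there too, and by the intermediate value theorem $Q_n$ has a zero strictly inside each interval $(x_{n+1,i},x_{n+1,i+1})$ for $i=1,\dots,n$. This produces $n$ distinct zeros, which are all the zeros of $Q_n$, exactly one per interval. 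That is precisely $x_{n+1,i}<x_{n,i}<x_{n+1,i+1}$.
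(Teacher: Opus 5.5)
Your argument is correct. The paper gives no proof of its own and simply cites Chihara; your proof is the standard one from that source: the confluent Christoffel--Darboux identity gives $Q_{n+1}'Q_n-Q_n'Q_{n+1}>0$ on $\mathbb{R}$ when all $b_i>0$, so $Q_n$ alternates in sign at the simple zeros of $Q_{n+1}$, and a degree count finishes. Your normalization worry in Step 1 is harmless: the paper's displayed constant $1/(b_1\cdots b_n)$ is an index typo for $1/(b_1\cdots b_m)$, but with all $b_i>0$ any such product is positive, so the sign of $W$ is unaffected.
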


\end{document}